\newtheorem{thm}{Theorem}
\newtheorem{lemma}[thm]{Lemma}
\newcommand{\RR}{\ensuremath{\mathbb{R}}}
\renewcommand{\S}{\ensuremath{\mathcal{S}}}
\DeclareMathOperator{\dist}{dist}
\DeclareMathOperator{\cell}{cell}
\DeclareMathOperator{\polylog}{polylog}
\begin{document}

\title{Shortest Paths in Intersection Graphs of Unit Disks}
\author{Sergio Cabello\thanks{Department of Mathematics, IMFM, and Department of Mathematics, FMF, University of Ljubljana, Slovenia. Supported by the Slovenian Research Agency, program P1-0297, projects J1-4106 and L7-5459, and by the ESF EuroGIGA project (project GReGAS) of the European Science Foundation. E-mail: \texttt{sergio.cabello@fmf.uni-lj.si}.}\and Miha Jej\v{c}i\v{c}\thanks{Faculty of Mathematics and Physics, University of Ljubljana, Slovenia. E-mail: \texttt{jejcicm@gmail.com}.}}

\maketitle

\begin{abstract}
	Let $G$ be a unit disk graph in the plane def\mbox{}ined by $n$ disks 
    whose positions are known.
	For the case when $G$ is unweighted, 
	we give a simple algorithm to compute a shortest path tree from a given source 
    in ${\mathcal O}(n\log n)$ time. 
    For the case when $G$ is weighted, 
	we show that a shortest path tree from a given source 
	can be computed in ${\mathcal O}(n^{1+\varepsilon})$ time, improving
	the previous best time bound of ${\mathcal O}(n^{4/3+\varepsilon})$.
\end{abstract}

%%%%%%%%%%%%%%%%%%%%%%%%%%%%%%%%%%%%%%%%%%%%%%%%%%%%%%%%%%%%%%%%%%%%%%%%%%%%%%%%%%%%%%%%%%%%%%%%%%%%%%%%%%%%%%%%%%%%%%%%%%%%%%%
\section{Introduction}

Each set $\S$ of geometric objects in the plane def\mbox{}ines its intersection graph in a natural way: the vertex set is $\S$ and there is an edge $ss'$ in the graph, $s,s'\in S$, whenever $s\cap s'\not= \emptyset$. It is natural to seek  faster algorithms when the input is constraint to geometric intersection graphs. Here we are interested in computing shortest path distances in unit disk graphs, that is, the intersection graph of equal sized disks. 

A unit disk graph is uniquely def\mbox{}ined by the centers of the disks. Thus, we will drop the use of disks and just refer to the graph $G(P)$ def\mbox{}ined by a set $P$ of $n$ points in the plane. The vertex set of $G(P)$ is $P$. Each edge of $G(P)$ connects points $p$ and $p'$ from $P$ whenever $\|p-p'\|\le 1$, where $\|\cdot\|$ denotes the Euclidean norm. See Figure~\ref{fig:GP} for an example of such graph. Up to a scaling factor, $G(P)$ is isomorphic to a unit disk graph. In the \emph{unweighted} case, each edge $pp'\in E(G(P))$ has unit weight, while in the \emph{weighted} case, the weight of each edge $pp'\in E(G(P))$ is $\|p-p'\|$. In all our algorithms we assume that $P$ is known. Thus, the input is $P$, as opposed to the abstract graph $G(P)$. 

\begin{figure}[htb]
  \centering
  \includegraphics{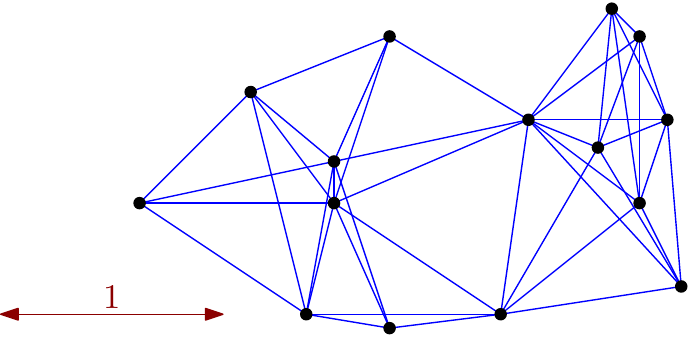}
  \caption{Example of graph $G(P)$.}
  \label{fig:GP}
\end{figure}

Exact computation of shortest paths in unit disks is considered by Roditty and Segal~\cite{rs-11}, under the name of \emph{bounded leg} shortest path problem. They show that, for the weighted case,
a shortest path tree can be computed in ${\mathcal O}(n^{4/3+\varepsilon})$ time.
They also note that the dynamic data structure for nearest neighbors of Chan~\cite{chan-10} imply that, in the unweighted case, shortest paths can be computed in ${\mathcal O}(n \log^6 n)$ expected time. 
(Roditty and Segal~\cite{rs-11} also consider data structures to $(1+\varepsilon)$-approximate shortest path distances in the intersection graph of congruent disks when the size of the disks is given at query time; they improve previous bounds of Bose et al.~\cite{bmnsz-04}. In this paper we do not consider that problem.)

Alon Efrat pointed out that a semi-dynamic data structure described by Efrat, Itai and Katz~\cite{eik-01} can be used to compute in ${\mathcal O}(n\log n)$ time a shortest path tree in the unweighted case. Given a set of $n$ unit disks in the plane, they construct in ${\mathcal O}(n\log n)$ time a data structure that, in ${\mathcal O}(\log n)$ amortized time, f\mbox{}inds a disk containing a query point and deletes it from the set. By repetitively querying this data structure, one can build a shortest path tree from any given source in ${\mathcal O}(n\log n)$ time in a straightforward way. At a very high level, the idea of the data structure is to consider a regular grid of constant-size cells and, for each cell of the grid, to maintain the set of disks that intersect it. This last problem, for each cell, reduces to the maintenance of a collection of upper envelopes of unit disks. Although the data structure is not very complicated, programming it would be quite challenging. 

For the unweighted case, we provide a \emph{simple} algorithm that in ${\mathcal O}(n\log n)$ time computes a shortest path tree in $G(P)$ from a given source. Our algorithm is implementable and considerably simpler than the data structure discussed in the previous paragraph or the algorithm of Roditty and Segal. 
For the weighted case, we show how to compute a shortest path tree in ${\mathcal O}(n^{1+\varepsilon})$ time. (Here, $\varepsilon$ denotes an arbitrary positive constant that we can choose and affects the constants hidden in the ${\mathcal O}$-notation.)
This is a signif\mbox{}icant improvement over the result of Roditty and Segal. In this  case we use a simple modif\mbox{}ication of Dijkstra's algorithm combined with a data structure to dynamically maintain a bichromatic closest pair under an Euclidean weighted distance. 
    
Gao and Zhang~\cite{gz-05} showed that the metric induced by a unit disk graph admits a compact well separated pair decomposition, extending the celebrated result of Callahan and Kosaraju~\cite{ck-95} for Euclidean spaces. For making use of the well separated pair decomposition, Gao and Zhang~\cite{gz-05} obtain a $(1+\varepsilon)$-approximation to shortest path distance in unit disk graphs in ${\mathcal O}(n\log n)$ time. Here we provide exact computation within comparable bounds.

Chan and Efrat~\cite{ce-01} consider a graph def\mbox{}ined on a point set but with more general weights in the edges. Namely, it is assumed that there is a function $\ell\colon \RR^2\times \RR^2 \rightarrow \RR_+$ such that the edge $pp'$ gets weight $\ell(p,p')$. Moreover, it is assumed that the function $\ell(p,p')$ is increasing with $\| p-p'\|$. When $\ell(p,p')=\| p-p'\|^2 f(\| p-p'\|)$ for a monotone increasing function $f$, then a shortest path can be computed in ${\mathcal O}(n \log n)$ time. Otherwise, if $\ell$ has constant size description, a shortest path can be computed in roughly ${\mathcal O}(n^{4/3})$ time.

There has been a vast amount of work on algorithmic problems for unit disk and a  review is beyond our possibilities. In the seminal paper of Clark, Colbourn and Johnson~\cite{ccj-90} it was shown that several $\mathcal{NP}$-hard optimization problems remain hard for unit disk graphs, although they showed the notable exception that maximum clique is solvable in polynomial time. Hochbaum and Maass~\cite{hm-85} gave polynomial time approximation schemes for f\mbox{}inding a largest independent set problems using the so-called shifting technique and there have been several developments since.

Shortest path trees can be computed for unit disk graphs in polynomial time. One can just construct $G(P)$ explicitly and run a standard algorithm for shortest paths. The main objective here is to obtain a faster algorithm that avoids the explicit construction of $G(P)$ and exploits the geometry of $P$. There are several problems that can be solved in polynomial time, but faster algorithms are known for geometric settings. A classical example is the computation of the minimum spanning tree of a set of points in the Euclidean plane. Using the Delaunay triangulation, the number of relevant edges is reduced from quadratic to linear.
For more advanced examples see Vaidya~\cite{Vaidya-89}, Efrat, Itai and Katz~\cite{eik-01}, Eppstein~\cite{eppstein-09}, or Agarwal, Overmars and Sharir~\cite{aos-06}.

\paragraph{Organization} In Section~\ref{sec:unweighted} we consider the unweighted case and in Section~\ref{sec:weighted} we consider the weighted case. 
We conclude listing some open problems.

%%%%%%%%%%%%%%%%%%%%%%%%%%%%%%%%%%%%%%%%%%%%%%%%%%%%%%%%%%%%%%%%%%%%%%%%
\section{Unweighted shortest paths}
\label{sec:unweighted}

In this section we consider the unweighted version of $G(P)$ and compute a shortest path tree from a given point $s\in P$. Pseudocode for the eventual algorithm is provided in Figure~\ref{fig:BFS}. Before moving into the details, we provide the main ideas employed in the algorithm.

As it is usually done for shortest path algorithms we use tables $\dist[\cdot]$ and $\pi[\cdot]$ indexed by the points of $P$ to record, for each point $p\in P$, the distance $d(s,p)$ and the ancestor of $p$ in a shortest $(s,p)$-path. We start by computing the Delaunay triangulation $DT(P)$ of $P$. We then proceed in rounds for increasing values of $i$, where at round $i$ we f\mbox{}ind the set $W_i$ of points at distance exactly $i$ in $G(P)$ from the source $s$. We start with $W_0=\{ s\}$. At round $i$, we use $DT(P)$ to grow a neighbourhood around the points of $W_{i-1}$ that contains $W_{i}$. More precisely, we consider the points adjacent to $W_{i-1}$ in $DT(P)$ as candidate points for $W_{i}$. For each candidate point that is found to lie in $W_{i}$, we also take its adjacent vertices in $DT(P)$ as new candidates to be included in $W_{i}$. For checking whether a candidate point $p$ lies in $W_{i}$ we use a data structure to f\mbox{}ind the nearest neighbour of $p$ in $W_{i-1}$, denoted by $\func{NN}(W_{i-1},p)$. Such data structure is just a point location data structure in the Voronoi diagram of $W_{i-1}$. Similarly, the shortest path tree is constructed by connecting each point of $W_i$ to its nearest neighbour in $W_{i-1}$. See Figure~\ref{fig:BFS} for the eventual algorithm \proc{UnweightedShortestPath}. In Figure~\ref{fig:exampleBFS} we show an example of what edges of the shortest path tree are computed in one iteration of the main loop.

\begin{figure}[htb]
\centering
\ovalbox{
\begin{minipage}{.9\hsize}
\begin{codebox}
    \Procname{$\proc{UnweightedShortestPath}(P,s)$}
    \li \For$p\in P$ \Do
    	\li $\dist[p] \gets \infty$
    	\li $\pi[p] \gets \const{nil}$\End
    \li $\dist[s] \gets  0$
    \li build the Delaunay triangulation $DT(P)$
    \li $W_0 \gets \{ s\}$
    \li $i\gets1$
    \li \While $W_{i-1}\neq\emptyset$ \Do
    	\li build data structure for nearest neighbour queries in $W_{i-1}$
    	\li $Q \gets W_{i-1}$ \>\>\>\Comment candidate points
        \li $W_{i}\gets\emptyset$
    	\li \While $Q\neq\emptyset$\Do
    		\li $q$ an arbitrary point of $Q$
            \li remove $q$ from $Q$
            \li \For $qp$ edge in $DT(P)$ \Do
            	\li $w \gets \func{NN}(W_{i-1},p)$
            	\li \If $\dist[p]=\infty$ and $\| p-w\|\leq 1$ \Then
                	\li $\dist[p]\gets i$
                    \li $\pi[p]\gets w$
                    \li add $p$ to $Q$
                    \li add $p$ to $W_{i}$
                    \End
            	\End
            \End
        \li $i\gets i+1$
    \End
    \li \Return $\dist[\cdot]$ and $\pi[\cdot]$
\end{codebox}
\end{minipage}}
\caption{Algorithm to compute a shortest path tree in the unweighted case.}
\label{fig:BFS}
\end{figure}

\begin{figure}[htb]
\centering
	\includegraphics[scale=.8,page=2]{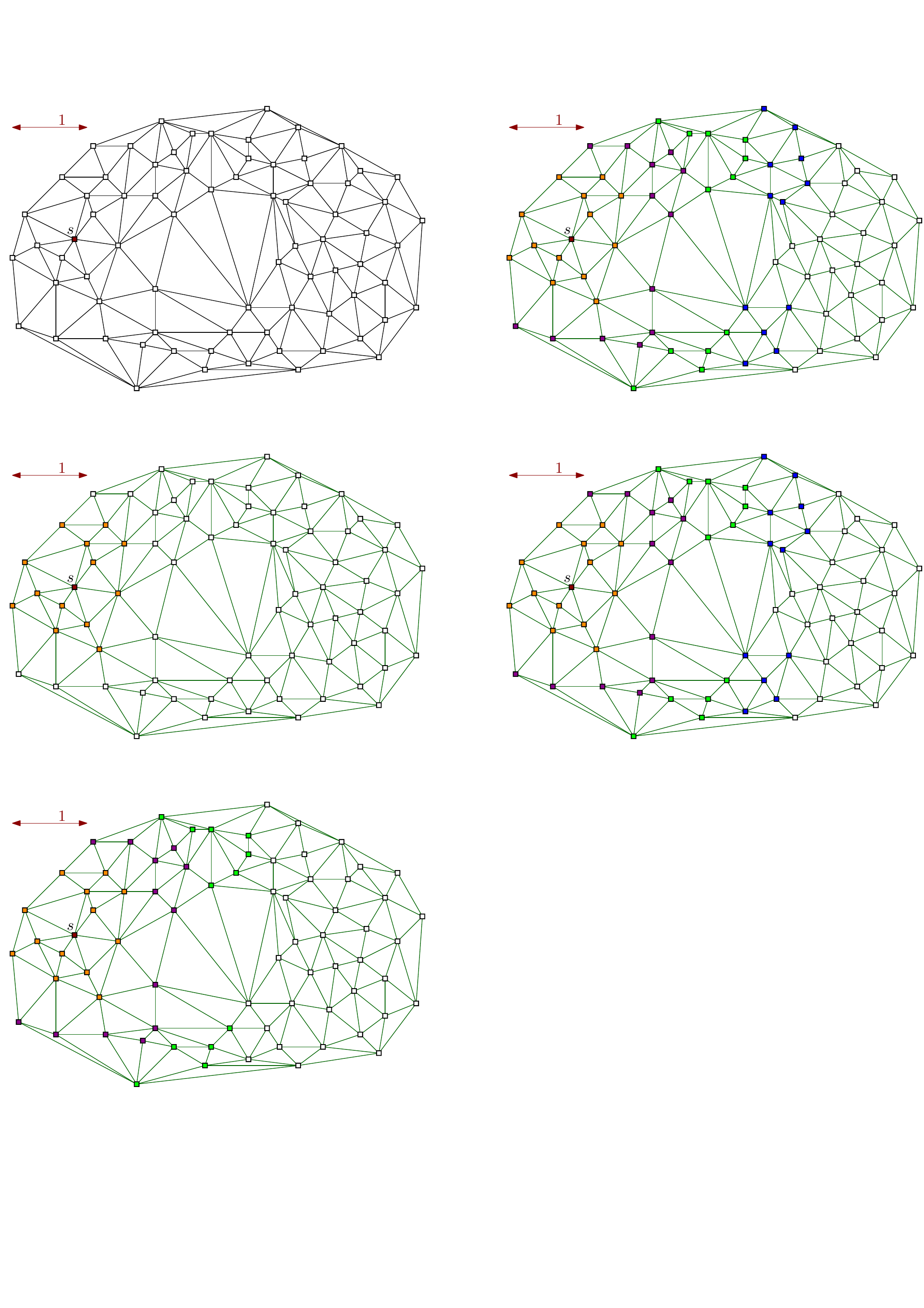}
	\caption{Top: A point set with its Delaunay triangulation. 
        The source is marked as $s$. Points $p$ such that 
        $d(s,p)\le 3$ are marked with red dots. Points 
        from $W_4$ are marked with blue boxes.
        Bottom: The new edges added to the tree at iteration $5$
        and the new vertices are shown. The light grey region is
        $\bigcup_{p\in W_4} D(p,1)$, where the Voronoi diagram of $W_4$ 
        is superimposed.}
\label{fig:exampleBFS}
\end{figure}

We would like to emphasize a careful point that we employ to achieve the running time $\mathcal{O}(n \log n)$. For any point $p$, let $D(p,1)$ denote the disk of radius $1$ centered at $p$.
In lines 16 and 17 of the algorithm, we check whether $p$ is at distance at most $1$ from \emph{some} point in $W_{i-1}$, namely its nearest neighbour in $W_{i-1}$. Checking whether $p$ is at distance at most $1$ from $\pi(q)$ (or $q$ when $q\in W_{i-1}$) would lead to a potentially larger running time. Thus, we do not grow each disk $D(w,1)$ independently for each $w\in W_{i-1}$, but we grow the whole region $\bigcup_{w\in W_{i-1}} D(w,1)$ at once. Growing each disk $D(w,1)$ separately would force as to check the same edge $qp$ of $DT(P)$ several times, once for each $w\in W_{i-1}$ such that $q\in D(w,1)$.

\begin{figure}[htb]
\centering
	\includegraphics[]{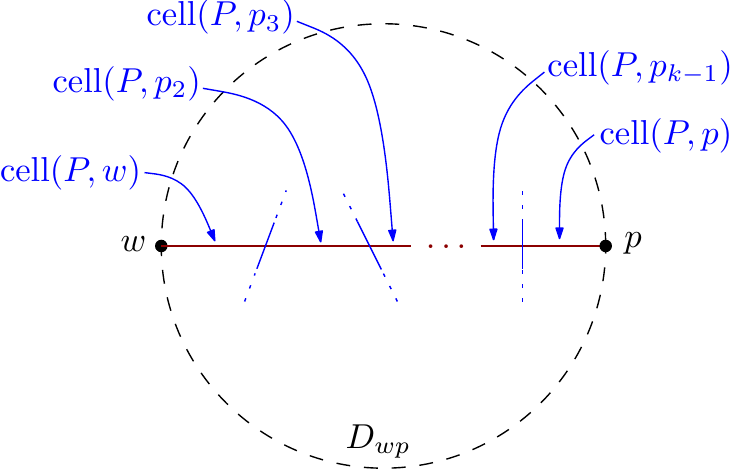}
	\caption{Proof of Lemma~\ref{le:path}.}
\label{fig:path}
\end{figure}

\begin{lemma}
\label{le:path}
	Let $p$ be a point from $P\setminus \{ s \}$ such that $d(s,p)<\infty$.
    There exists a point $w$ in $P$ and a path $\pi$ in $DT(P)\cap G(P)$ from
    $w$ to $p$ such that $d(s,w)+1=d(s,p)$ and
    each internal vertex $p_j$ of $\pi$ satisf\mbox{}ies $d(s,p_j)= d(s,p)$.
\end{lemma}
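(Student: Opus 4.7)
The plan is to construct $\pi$ as the Delaunay path that the segment $\overline{wp}$ traces through the Voronoi cells of $P$, for a carefully chosen source $w$. Let $k=d(s,p)$. The set of $u\in P$ with $d(s,u)=k-1$ and $\|u-p\|\le 1$ is non-empty (the predecessor of $p$ on any shortest $s$--$p$ path in $G(P)$ witnesses this), so I can choose $w$ to be an element of this set minimizing $\|w-p\|$. Put $r=\|w-p\|\le 1$. Walk along $\overline{wp}$ from $w$ to $p$ and record the sequence $w=v_0,v_1,\ldots,v_m=p$ of sites whose Voronoi cells the segment enters; consecutive sites $v_i,v_{i+1}$ share a Voronoi edge crossed by $\overline{wp}$, so $v_iv_{i+1}$ is a Delaunay edge.

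The geometric heart of the argument is that every $v_i$ lies in the closed disk $D$ of diameter $\overline{wp}$ (centre $c=(w+p)/2$, radius $r/2$). Each $v_i$ is the nearest site of $P$ to some $x_i\in\overline{wp}$, so $v_i\in B(x_i,\rho_i)$ with $\rho_i=\min(\|w-x_i\|,\|p-x_i\|)$. Parametrizing $x_i=(1-t)w+tp$ gives $\|x_i-c\|=|t-\tfrac12|\,r$ and $\rho_i=\min(t,1-t)\,r$, so the identity $\min(t,1-t)+|t-\tfrac12|=\tfrac12$ yields $\rho_i+\|x_i-c\|=\tfrac12 r$ and therefore $B(x_i,\rho_i)\subseteq D$. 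Consequently all $v_i$ lie in $D$, which has diameter $r\le 1$, so each edge $v_iv_{i+1}$ has length at most $r\le 1$ and belongs to $G(P)$. This shows $\pi$ is a path in $DT(P)\cap G(P)$.

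It remains to verify the distance condition on internal vertices. Fix $1\le i\le m-1$. Since $v_i\in D$ and $D$ has diameter $r$, we get $\|v_i-w\|\le r\le 1$ and $\|v_i-p\|\le r\le 1$, so $v_i$ is $G(P)$-adjacent to both $w$ and $p$. This forces $d(s,p)-1\le d(s,v_i)\le d(s,w)+1$, i.e.\ $d(s,v_i)\in\{k-1,k\}$. Suppose for contradiction $d(s,v_i)=k-1$; then $v_i$ is admissible as the choice of source and, by the minimality of $r$, must satisfy $\|v_i-p\|=r$. But the only point of $D$ at distance exactly $r$ from $p$ is its antipode $w$, so $v_i=w$, contradicting that $v_i$ is an internal vertex of $\pi$. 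Hence $d(s,v_i)=k=d(s,p)$, as claimed.

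The part I expect to require the most care is the containment $B(x_i,\rho_i)\subseteq D$, although once one notices the identity $\min(t,1-t)+|t-\tfrac12|=\tfrac12$ it is immediate; everything else then follows cleanly from the diameter bound on $D$ together with the minimizing choice of $w$, which unlocks the antipode argument used to eliminate the possibility $d(s,v_i)=k-1$.
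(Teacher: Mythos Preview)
Your argument is correct and follows essentially the same route as the paper's: choose $w$ as a nearest level-$(k-1)$ point to $p$, take the Delaunay path traced by the segment $\overline{wp}$ through the Voronoi diagram, and use that every site visited lies in the disk on diameter $\overline{wp}$ to get both the $G(P)$-edge lengths and the antipode contradiction ruling out $d(s,v_i)=k-1$. The only item the paper adds that you omit is a one-line remark handling the degenerate case where $\overline{wp}$ passes through a Voronoi vertex; otherwise the two proofs coincide, with yours spelling out the containment $B(x_i,\rho_i)\subseteq D$ more explicitly.
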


\begin{proof}
	Let us set $i=d(s,p)$
    Let $w$ be the point with $d(s,w)=i-1$ that is closest to $p$ in Euclidean
    distance. It must be that $\| w-p\|\le 1$ because $d(s,p)<\infty$.
	Let $D_{wp}$ be the disk with diameter $wp$. 
    
    For simplicity, let us assume that the segment $wp$ 
    does not go through any vertex of the Voronoi diagram of $P$.
    (In the degenerate case where $wp$ goes through a vertex of the Voronoi diagram, 
	we can replace $p$ by a 
    point $p'$ arbitrarily close to $p$.)
    Consider the sequence of Voronoi cells 
    $\cell(p_1,P),\,\dots,\,\cell(p_k,P)$ intersected by the segment $wp$, 
    as we walk from $w$ to $p$. See Figure~\ref{fig:path}.
    Clearly $w=p_1$ and $p=p_k$. For each $1\le j< k$, the edge $p_j p_{j+1}$ is in
    $DT(P)$ because $\cell(p_j,P)$ and $\cell(p_{j+1},P)$ 
    are adjacent along some point of $wp$. 
    Therefore the path $\pi=p_1 p_2\dots p_k$ is contained
    in $DT(P)$ and connects $w$ to $p$. 
    For any index $j$ with $1< j< k$, let $a_j$ be any point in 
    $wp \cap \cell(p_j,P)$.
    Since $\| a_j p_j\| \le \min \{ \| a_jw\| , \| a_jp \| \}$, 
    the point $p_j$ is contained in $D_{wp}$. 
    Therefore the whole path $\pi$ is contained in $D_{wp}$ and,
    since $D_{wp}$ has diameter at most one,
    each edge of $\pi$ is also in $G(P)$. We conclude that
    $\pi$ is a path in $DT(P)\cap G(P)$.
    
    Consider any point $p_j$ of $\pi$, which is thus contained in $D_{wp}$.
    Because $\| w - p_j\|\le \|w-p\|\le 1$, we have $d(s,p_j)\le d(s,w)+1 = i$. 
    Because $\| p_j -p\|\le \|w-p\|\le 1$, we have $d(s,p_j)\ge d(s,p)-1 =i-1$.
    However, the choice of $w$ as closest to $p$ implies that 
    $d(s,p_j)\not= i-1$ because $\| p_j -p\|< \|w-p\|$.
    Therefore $d(s,p_j)=i$. 
    We conclude that all internal vertices $p_j$ of $\pi$ satisfy $d(s,p_j)=i$.
\end{proof}

\begin{lemma}
\label{le:correct}
	At the end of algorithm \proc{UnweightedShortestPath}$(P,S)$ it holds 
    \[ \forall i\in \mathbb{N}\cup \{ 0\}: ~~~ W_i ~=~ \{ p\in P \mid d(s,p)=i\}. \]
    Moreover, for each point $p\in P\setminus\{ s \}$, 
    it holds that $\dist[p]=d(s,p)$ and, if $d(s,p)<\infty$, 
    there is a shortest path in $G(P)$ from $s$ to $p$ that uses 
    $\pi[p]p$ as last edge.    
\end{lemma}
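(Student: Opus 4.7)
The plan is to prove both statements simultaneously by induction on $i$. I maintain the invariant that after the outer iteration with loop variable $i$, one has $W_i=\{p\in P\mid d(s,p)=i\}$, every $p\in W_0\cup\dots\cup W_i$ satisfies $\dist[p]=d(s,p)$, every other $p\in P$ still has $\dist[p]=\infty$, and every $p\in W_i$ with $i\ge 1$ has $\pi[p]\in W_{i-1}$ with $\|p-\pi[p]\|\le 1$. The base case $i=0$ holds by initialisation; I then assume the invariant through round $i-1$ and argue round $i$.

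For soundness ($W_i\subseteq\{p:d(s,p)=i\}$), each time a point $p$ is inserted into $W_i$ the guard $\|p-w\|\le 1$ with $w=\func{NN}(W_{i-1},p)\in W_{i-1}$ yields $d(s,p)\le d(s,w)+1=i$, while the guard $\dist[p]=\infty$ combined with the inductive hypothesis forces $p\notin W_0\cup\dots\cup W_{i-1}$ and therefore $d(s,p)\ge i$, yielding equality. The last-edge property follows because $\pi[p]$ is assigned the value $w\in W_{i-1}$ with $\|p-w\|\le 1$, so concatenating any shortest $(s,w)$-path with the edge $wp$ produces a shortest $(s,p)$-path.

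For completeness ($\{p:d(s,p)=i\}\subseteq W_i$) I invoke Lemma~\ref{le:path}, which supplies, for each $p$ with $d(s,p)=i$, a path $p_1p_2\dots p_k$ in $DT(P)\cap G(P)$ with $p_k=p$, $d(s,p_1)=i-1$, and $d(s,p_j)=i$ for $2\le j\le k$. By the inductive hypothesis $p_1\in W_{i-1}$, hence $p_1$ is initially in $Q$. A secondary induction on $j$ shows that each $p_j$ with $j\ge 2$ eventually enters $W_i$: when $p_{j-1}$ is extracted from $Q$, the Delaunay edge $p_{j-1}p_j$ is examined, and since $d(s,p_j)=i$ there exists some $w'\in W_{i-1}$ with $\|w'-p_j\|\le 1$, which makes the test $\|p_j-\func{NN}(W_{i-1},p_j)\|\le 1$ succeed. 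Either $\dist[p_j]$ is still infinite and $p_j$ is pushed into both $W_i$ and $Q$, or $\dist[p_j]<\infty$, in which case soundness together with $d(s,p_j)=i$ already places $p_j$ in $W_i$. In particular $p=p_k$ lands in $W_i$.

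The main obstacle is the completeness step, which a naive Delaunay-based BFS does not obviously handle: long Delaunay edges can jump past points at the current distance level. Lemma~\ref{le:path} is what unlocks the argument, since it certifies for every $p\in W_i$ a Delaunay path from some $w\in W_{i-1}$ to $p$ whose interior lies entirely within $W_i$, so the propagation of $Q$ inside round $i$ is guaranteed to reach $p$ through vertices the algorithm is bound to process. Termination and the handling of $d(s,p)=\infty$ are then immediate: the loop halts at the first $i$ with $W_{i-1}=\emptyset$, and any $p$ with $\dist[p]=\infty$ at that moment is, by soundness and completeness, exactly one unreachable from $s$ in $G(P)$.
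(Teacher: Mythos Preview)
Your proof is correct and follows essentially the same approach as the paper's: induction on $i$, soundness via the guards in lines 16--17 together with the inductive hypothesis, and completeness via Lemma~\ref{le:path} with a secondary induction along the Delaunay path. Your treatment is in fact slightly more careful than the paper's, since you make the invariant on $\dist[\cdot]$ explicit and you distinguish the case in the completeness step where $p_j$ may already have been placed in $W_i$ along another path.
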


\begin{proof}
	We prove the statement by induction on $i$. 
    $W_0=\{ s\}$ is set in line 6 and never changed. Thus the statement holds
    for $i=0$.
    
    Before considering the inductive step, note that the sets
    $W_0,\,W_1,\,\dots$ are pairwise disjoint. Indeed, a point $p$ is added to 
    some $W_i$ (line 21) at the same time that we set $\dist[p]=i$ (line 18).
    After setting $\dist[p]$, the test in line 17 is always false and
    $p$ is not added to any other set $W_j$.
    
    Consider any value $i>0$. By induction we have that
    \[ W_{i-1} ~=~ \{ p\in P \mid d(s,p)=i-1\}. \]
    In the algorithm we add points to $W_i$ only in line 21. If a point $p$
    is added to $W_{i}$, then $\| p-w\|\le 1$ for some $w\in W_{i-1}$
    because of the test in line 17. Therefore any point $p$ added to
    $W_{i}$ satisf\mbox{}ies $d(s,p)\le i$. Since $p\notin W_{i-1}$, the disjointness
    of the sets $W_0,W_1,\dots,$ implies that $d(s,p)= i$.
    We conclude that
    \[
    	 W_i ~\subseteq ~ \{ p\in P \mid d(s,p)=i\}.
    \]

    For the reverse containment, let $p$ be any point such that
    $d(s,p)=i$. We have to show that $p$ is added to $W_i$ by the algorithm.
    Consider the point $w$ and the path $\pi=p_1\dots p_k$ 
    guaranteed by Lemma~\ref{le:path}.    
    By the induction hypothesis, $w=p_1 \in W_{i-1}$
    and thus is added to $Q$ in line 10.
    At some moment the edge $p_1p_2$ is considered in line 15
    and the point $p_2$ is added to $W_i$ and $Q$.
    An inductive argument thus shows that all the points 
    $p_{3},\dots, p_k$ are added to $W_i$ and $Q$ 
    (possibly in a dif\mbox{}ferent order).
    It follows that $p_k=p$ is added to $W_i$ and thus    
    \[
    	 W_i ~=~ \{ p\in P \mid d(s,p)=i\}.
    \]    
    Since a point $p$ is added to $W_i$ at the same time that
    $\dist[p]=i$ is set, it follows that $\dist[p]=i=d(s,p)$.
    Since $\pi[p]\in W_{i-1}$ and $\| p-\pi[p]\| \le 1$ (lines 16, 17 and 19),
    there is a shortest path in $G(P)$ from $s$ to $p$ that uses 
    an $(i-1)$-edge path from $s$ to $\pi[p]$, by induction, followed by 
    the edge $\pi[p]p$.
\end{proof}

\begin{lemma}
\label{le:time}
	The algorithm \proc{UnweightedShortestPath}$(P,S)$ takes
    $\mathcal{O}(n \log n)$ time, where $n$ is the size of $P$.
\end{lemma}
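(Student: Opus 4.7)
The plan is to account separately for the four kinds of work the algorithm performs: building $DT(P)$, building the nearest-neighbour data structures for the $W_{i-1}$'s, enumerating Delaunay edges incident to the points dequeued from $Q$, and the nearest-neighbour queries themselves. The only non-trivial step is bounding the total number of edge-examinations across all outer iterations; everything else follows from standard bounds together with the disjointness of the sets $W_i$ established in Lemma~\ref{le:correct}.

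First I would note that the Delaunay triangulation in line~5 is built in $\mathcal{O}(n\log n)$ time. Next, a point-location data structure on the Voronoi diagram of $W_{i-1}$ is built in $\mathcal{O}(|W_{i-1}|\log|W_{i-1}|)$ time and answers nearest-neighbour queries in $\mathcal{O}(\log|W_{i-1}|)$ time. Because the sets $W_0,W_1,\dots$ are pairwise disjoint subsets of $P$, we have $\sum_i |W_{i-1}|\le n$, so the total construction cost telescopes to $\mathcal{O}(n\log n)$.

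The key step is to bound $\sum_i E_i$, where $E_i$ is the number of Delaunay edges $qp$ examined in line~15 during iteration $i$. Inspection of the inner while loop shows that the multiset of points $q$ popped from $Q$ during iteration $i$ is exactly $W_{i-1}\cup W_i$: $W_{i-1}$ is loaded in line~10, and each point added to $W_i$ in line~21 is also pushed to $Q$ in line~20, while the test in line~17 prevents any point from being enqueued twice. Hence $E_i=\sum_{q\in W_{i-1}\cup W_i}\deg_{DT(P)}(q)$. Because each point $q\in P$ belongs to exactly one set $W_j$, it contributes to the sum only for $i=j$ and $i=j+1$, so
\[
\sum_i E_i \;\le\; 2\sum_{q\in P}\deg_{DT(P)}(q) \;=\; 4\,|E(DT(P))| \;=\; \mathcal{O}(n),
\]
using the fact that a Delaunay triangulation on $n$ points has $\mathcal{O}(n)$ edges.

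Finally, each examined edge triggers one nearest-neighbour query in the current $W_{i-1}$ at cost $\mathcal{O}(\log n)$, and the constant-time bookkeeping in lines~16--21 fits inside the same budget. Combining, the total work of the nested loops is $\mathcal{O}(\sum_i E_i \cdot \log n)=\mathcal{O}(n\log n)$, and adding the preprocessing gives the claimed $\mathcal{O}(n\log n)$ bound. The only subtlety I expect to double-check is the enqueueing argument (that each $q$ is popped from $Q$ exactly once per iteration in which it belongs to $W_{i-1}\cup W_i$), which is immediate from the $\dist[p]=\infty$ guard in line~17.
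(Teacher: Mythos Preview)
Your proof is correct and follows essentially the same approach as the paper: both argue that the points dequeued from $Q$ in iteration $i$ are exactly $W_{i-1}\cup W_i$, then use the disjointness of the $W_j$'s (so each point is dequeued in at most two iterations) together with $\sum_q \deg_{DT(P)}(q)=\mathcal{O}(n)$ to bound the total edge-examinations by $\mathcal{O}(n)$. The only imprecision is the phrase ``each point $q\in P$ belongs to exactly one set $W_j$'' --- points unreachable from $s$ belong to no $W_j$ --- but this is harmless since such points are never enqueued.
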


\begin{proof}
	For each point $q\in P$, let $\deg_{DT(P)}(q)$ denote the degree of $q$ in 
	the Delaunay triangulation $DT(P)$.
	The main observations used in the proof are the following:
    each point of $P$ is added to $Q$ at most once in line 10 and
    once in line 20, the execution
    of lines 13--21 for a point $q$ takes time $\mathcal{O}(\deg_{DT(P)}(q)\log n)$, 
    the sum of the degrees of the points in $DT(P)$ is $\mathcal{O}(n)$,
    and in line 9 we spend time $\mathcal{O}(n \log n)$ overall iterations together.
    We next provide the details.
    
	The Delaunay triangulation of $n$ points can be computed in $\mathcal{O}(n \log n)$ time.
    Thus the initialization in lines 1--7 takes $\mathcal{O}(n \log n)$ time.
    It remains to argue that the loop in lines 8--22 takes time $\mathcal{O}(n\log n)$.
    
	An execution of the lines 9--11 takes time 
    $\mathcal{O}(|W_{i-1}|\log |W_{i-1}|)=\mathcal{O}(|W_{i-1}|\log n)$. Each
    subsequent nearest neighbour query takes $\mathcal{O}(\log n)$ time.
    
    Each execution of the lines 16--21 takes time ${\mathcal O}(\log n)$,
    where the most demanding step is the query made in line 16.
    Each execution of the lines 13--21 takes time 
    ${\mathcal O}(\deg_{DT(P)}(q)\cdot \log n)$ because the lines 16--21
    are executed $\deg_{DT(P)}(q)$ times.
    
	Consider one execution of the lines 9--22 of the algorithm.
    Points are added to $Q$ in lines 10 and 20. In the latter case, a point $p$
    is added to $Q$ if and only if it is added to $W_{i}$ (line 21). 
    It follows that a point is added to $Q$ if and only if it belongs 
    to $W_{i-1}\cup W_{i}$. Moreover, each point of $W_{i-1}\cup W_{i}$ is added
    exactly once to $Q$: each point $p$ that is added to $Q$ has 
    $\dist[p]\le i<\infty$
    and will never be added again because of the test in line 17.
	It follows that the loop in lines 12--22 takes time 
	\[
    	\sum_{q\in W_{i-1}\cup W_{i}} \mathcal{O}(\deg_{DT(P)}(q)\cdot \log n).
	\]
    Therefore we can bound the time spent in the the loop of lines 8--22 by
    \begin{equation}
    \label{eq:bound1}
    	\sum_i \mathcal{O}\left( |W_i|\log n + 
        \sum_{q\in W_{i-1}\cup W_{i}} \left(\deg_{DT(P)}(q)\cdot \log n\right)\right).
    \end{equation}    
    Using that the sets $W_0,\,W_1,\,\dots$ are pairwise 
    disjoint (Lemma~\ref{le:correct}) with $\sum_i |W_i|\le n$ and
    \[
    	\sum_{q\in P} \deg_{DT(P)}(q)~=~ 2\cdot |E(DT(P))| = \mathcal{O}(n),
    \]
    the bound in \eqref{eq:bound1} becomes $\mathcal{O}(n\log n)$.    
\end{proof}

\begin{thm}
  Let $P$ be a set of $n$ points in the plane and let $s$ be a point from $P$. 
  In time ${\mathcal O}(n \log n)$ we can compute a shortest path tree from $s$
  in the unweighted graph $G(P)$.
\end{thm}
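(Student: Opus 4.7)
The plan is to show that the theorem follows almost immediately from the algorithm \proc{UnweightedShortestPath}$(P,s)$ together with the two lemmas already proved. The algorithm is the concrete witness of the $\mathcal{O}(n\log n)$ bound, and I only need to package the correctness and timing lemmas into a shortest path tree statement.

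First, I would run the algorithm \proc{UnweightedShortestPath}$(P,s)$ on the input $(P,s)$ and return the tables $\dist[\cdot]$ and $\pi[\cdot]$. By Lemma~\ref{le:correct}, for every $p\in P\setminus\{s\}$ we have $\dist[p]=d(s,p)$, and whenever $d(s,p)<\infty$ the point $\pi[p]$ is a neighbour of $p$ in $G(P)$ lying one hop closer to $s$. Therefore the edges $\{\pi[p]\,p : p\in P\setminus\{s\},\ d(s,p)<\infty\}$ together with $s$ form a tree rooted at $s$ in which the unique $s$-to-$p$ path has length $d(s,p)$; this is exactly a shortest path tree from $s$ restricted to the component of $s$. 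Points $p$ with $d(s,p)=\infty$ are exactly those with $\dist[p]=\infty$ and are not connected to $s$ in $G(P)$, so they are correctly omitted from the tree.

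For the running time, Lemma~\ref{le:time} gives that the whole algorithm takes $\mathcal{O}(n\log n)$ time, and converting the output $\pi[\cdot]$ into the explicit adjacency-list representation of the shortest path tree is a single pass in $\mathcal{O}(n)$ time. Hence the total cost stays $\mathcal{O}(n\log n)$, which is what the theorem claims.

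There is no real obstacle here, since both correctness and the time bound have been established separately; the only thing to verify is that the $\pi[\cdot]$ pointers indeed encode a tree (no cycles and exactly one parent per reachable non-source vertex), which is immediate because $\pi[p]$ is set at most once (the guard $\dist[p]=\infty$ in line 17 of the algorithm fires only on the first assignment), and whenever $\pi[p]$ is assigned we have $\dist[\pi[p]]=\dist[p]-1$, so following parents strictly decreases $\dist[\cdot]$ and must terminate at $s$.
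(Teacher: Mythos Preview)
Your proposal is correct and follows essentially the same approach as the paper: invoke Lemma~\ref{le:correct} for correctness of $\dist[\cdot]$ and $\pi[\cdot]$, and Lemma~\ref{le:time} for the running time. Your additional check that the $\pi[\cdot]$ pointers indeed form a tree (via the strictly decreasing $\dist$ values along parent edges) is a nice explicit justification that the paper leaves implicit, but otherwise the arguments coincide.
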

\begin{proof}
  Consider the algorithm \proc{UnweightedShortestPath}$(P,S)$ given in 
  Figure~\ref{fig:BFS}. Because of Lemma~\ref{le:time} it takes time ${\mathcal O}(n\log n)$.
  Because of Lemma~\ref{le:correct}, the table $\pi[\cdot]$ correctly describes
  a shortest path tree from $s$ in $G(P)$ and $\dist[\cdot]$ correctly describes shortest path distances in $G(P)$.
\end{proof}

%%%%%%%%%%%%%%%%%%%%%%%%%%%%%%%%%%%%%%%%%%%%%%%%%%%%%%%%%%%%%%%%%%%%%%%%%%%%%%%%%%%%%%%%%%%%%%%%%%%%%%%%%%%%%%%%%%%%%%%%%%%%%%%5
\section{Weighted shortest paths}
\label{sec:weighted}

In this section we consider the \texttt{SSSP} problem on the weighted version of $G(P)$: points $p$ and $q$ have an edge between them if\mbox{}f $\|p-q\|\leq1$ and the weight of that edge is $\|p-q\|$. Our algorithm uses a dynamic data structure for bichromatic closest pairs. We f\mbox{}irst review the precise data structure that we will employ. We then describe the algorithm and discuss its properties.

%%%%%%%%%%%%%%%%%%%%%%%%%%%%%%%%%%%%%%%%%%%%%%%%%%%%%%%%%%%%%%%%%%%%%%%%%%%%%%%%%%%%%%%%%%%5
\subsection{Bichromatic closest pair}

In the bichromatic closest pair problem, we are given a set of red points and
a set of blue points in a metric space, and we have to find the pair
of points, one of each colour, that are closest. Many versions and generalizations 
of this basic problem have been studied. Here, we are interested 
in a dynamic version with a functional reminiscent of distances.

Let $P$ be a set of $n$ points in the plane and let each point $p\in P$ have a weight $w_p\geq0$. We call a function $\delta\colon\mathbb{R}^2\times P\longrightarrow\mathbb{R}_+$ a \emph{(additive) weighted Euclidean metric}, if it is of the form
$$\delta(q,p)=w_p+\|q-p\|,$$
where $\|\cdot\|$ denotes the Euclidean distance.

Let $\varepsilon>0$ denote an arbitrary constant. Agarwal, Efrat and Sharir \cite{aes-99} showed that for any $P$ and $\delta$ as above, $P$ can be preprocessed in $\mathcal{O}(n^{1+\varepsilon})$ time into a data structure of size $\mathcal{O}(n^{1+\varepsilon})$ so that points can be inserted into or deleted from $P$ in $\mathcal{O}(n^\varepsilon)$ amortized time per update, and a nearest-neighbour query can be answered in $\mathcal{O}(\log n)$ time. Eppstein \cite{eppstein-95} had already shown that if such a dynamic data structure existed, then a bichromatic closest pair (BCP) under $\delta$ of red and blue points in the plane could be maintained, adding only a polylogarithmic factor to the update time. Combining these two results gives

\begin{thm}[Agarwal, Efrat, Sharir \cite{aes-99}]\label{thm:aes-bcp}
Let $R$ and $B$ be two sets of points in the plane with a total of $n$ points. We can store $R\cup B$ in a dynamic data structure of size $\mathcal{O}(n^{1+\varepsilon})$ that maintains a bichromatic closest pair in $R\times B$, under any weighted Euclidean metric, in $\mathcal{O}(n^\varepsilon)$ amortized time per insertion or deletion.\qed
\end{thm}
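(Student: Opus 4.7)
The theorem is stated as a combination of two previously known results, so my plan is to assemble it from those two building blocks rather than give a self-contained construction. The first ingredient is the dynamic additively weighted nearest neighbour structure of Agarwal, Efrat and Sharir; the second is Eppstein's black-box reduction from maintaining a bichromatic closest pair to supporting dynamic nearest neighbour queries. I would present the proof as: (a) instantiate the first structure on $R\cup B$ under $\delta$, (b) feed it into the second reduction, and (c) check that the resulting bounds match what is claimed.

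For the first ingredient, I would recall that Agarwal, Efrat and Sharir build a multi-level range/searching structure over a compressed additively weighted Voronoi diagram, dynamised via the standard logarithmic method: the current point set is partitioned into $O(\log n)$ static substructures of geometrically decreasing size, so that a query takes the minimum over substructures in $O(\log n)$ time, while an insertion or deletion is absorbed by rebuilding one substructure in $O(n^\varepsilon)$ amortised time. The key point, which I would emphasise, is that the whole construction works for the additively weighted metric $\delta(q,p)=w_p+\|q-p\|$ and not merely for the Euclidean distance, because additively weighted Voronoi diagrams still have linear complexity and admit point-location in logarithmic time.

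For the second ingredient, Eppstein's reduction keeps, for each point, a pointer to its currently closest partner of the opposite colour; the global bichromatic closest pair is the minimum-$\delta$ pair among these pointers, stored in a priority queue. An insertion is handled by one nearest-neighbour query against the opposite colour class and updating a constant number of pointers. A deletion is more delicate, because the deleted point may be the closest partner of many other points, so those points must re-query; Eppstein bounds the amortised cost by a careful tournament/hierarchy argument that charges re-queries to a potential function defined over the two colour classes, paying only a polylogarithmic factor over the underlying nearest-neighbour operations. Plugging in $T_u(n)=O(n^\varepsilon)$ and $T_q(n)=O(\log n)$ and absorbing the polylogarithmic overhead into $n^\varepsilon$ (by shrinking $\varepsilon$ slightly) gives $O(n^\varepsilon)$ amortised update time and $O(n^{1+\varepsilon})$ space, as claimed.

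The step I would watch most carefully is the second one: Eppstein's reduction is usually stated for a genuine metric, so I would verify that the monotonicity/triangle-inequality properties of $\delta$ that his argument uses really do hold for the additively weighted Euclidean metric. Since $\delta(q,p)-w_p=\|q-p\|$ is a Euclidean distance and the weights are attached only to the ``target'' point, the bichromatic nearest-partner invariants are preserved exactly as in the unweighted case, and no essential change to his analysis is needed. With that sanity check in place, the theorem follows immediately.
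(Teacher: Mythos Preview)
Your proposal is correct and matches the paper's treatment exactly: the paper does not give a standalone proof but simply states, in the paragraph preceding the theorem, that one combines the dynamic additively-weighted nearest-neighbour structure of Agarwal, Efrat and Sharir with Eppstein's black-box reduction for maintaining a bichromatic closest pair, absorbing the polylogarithmic overhead into the $n^\varepsilon$ term. Your added remarks about why Eppstein's reduction still applies under the one-sided weighted metric $\delta$ are a welcome sanity check but go slightly beyond what the paper itself spells out.
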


%%%%%%%%%%%%%%%%%%%%%%%%%%%%%%%%%%%%%%%%%%%%%%%%%%%%%%%%%%%%%%%%%%%%%%%%%%%%%
\subsection{Algorithm}

We will use a variant of Dijkstra's algorithm. As before, we maintain tables $\dist[\cdot]$ and $\pi[\cdot]$ containing distances from the source and parents of points in the shortest path tree. As in Dijkstra's algorithm we will maintain a set $S$ (containing the source $s$) of points for which the correct distance from $s$ has already been computed, and a set $P\setminus S$ of points for which the distance has yet to be computed. For the points of $S$, $\dist[\cdot]$ stores the true distance from the source.

In our approach we split $S$ into sets $B$ and $D$, called ``blue'' and ``dead'' points, respectively. We call the points in $R=P\setminus S$ ``red'' points. The reason for the introduction of the ``dead'' points $D$ is that, as it will be proved, during the entire algorithm it is not possible there is an edge of $G(P)$ between a point in $D$ and a point in $R$. Thus, the points of $D$ are not relevant to f\mbox{}ind the last edge in a shortest path to points of $R$.

We store $R\cup B$ in the dynamic data structure from Theorem~\ref{thm:aes-bcp} that maintains the bichromatic closest pair (BCP) in $R\times B$ under the weighted Euclidean metric
$$\delta(r,b):=\dist[b]+\|r-b\|.$$
At each iteration of the main \kw{while} loop, we query the data structure for a BCP pair $(b^\ast,r^\ast)$. If $b^\ast r^\ast$ is not an edge in our underlying graph $G(P)$, meaning $\|b^\ast-r^\ast\|>1$, then $b^\ast$ will never be the last vertex to any point in $R$, and therefore we will move it from $B$ to $D$. If $b^\ast r^\ast$ is an edge of $G(P)$, then, as it happens with Dijkstra's algorithm, we have completed a shortest path to $r^\ast$. The algorithm is given in Figure~\ref{fig:algo-dijkstra}. Figure~\ref{fig:d-b-r} shows sets $D$, $B$, and $R$ in the middle of a run of the algorithm.

\begin{figure}[h!]
\begin{center}
\ovalbox{
\begin{minipage}{\hsize}
\begin{codebox}
\Procname{$\proc{WeightedShortestPaths}(P,s)$}
\li \For$p\in P$
\li \Do$\dist[p]\gets\infty$
\li $\pi[p]\gets\const{nil}$\End
\li $\dist[s]\gets0$
\li $B\gets\{s\}$
\li $D\gets\emptyset$
\li $R\gets P\setminus\{s\}$
\li store $R\cup B$ in the BCP dynamic DS of Theorem~\ref{thm:aes-bcp} wrt $\delta(r,b)$
\li \While$R\neq\emptyset$
\li \Do\If$B=\emptyset$
\li \Then\Return $\dist[\cdot]$ and $\pi[\cdot]$  \>\>\>\>\>\>\Comment$G(P)$ is not connected
\li \Else$(b^\ast,r^\ast)\gets\func{BCP}(B,R)$\End
\li \If$\|b^\ast-r^\ast\|>1$
\li \Then$\func{delete}(B,b^\ast)$
\li $D\gets D\cup\{b^\ast\}$
\li \Else$\dist[r^\ast]\gets\dist[b^\ast]+\|b^\ast-r^\ast\|$
\li $\pi[r^\ast]\gets b^\ast$
\li $\func{delete}(R,r^\ast)$
\li $\func{insert}(B,r^\ast)$\End\End
\li \Return $\dist[\cdot]$ and $\pi[\cdot]$
\end{codebox}
\end{minipage}
}
\end{center}
\caption{Algorithm for \texttt{SSSP} in the weighted case.}
\label{fig:algo-dijkstra}
\end{figure}

\begin{figure}[h!]
\begin{center}
\includegraphics{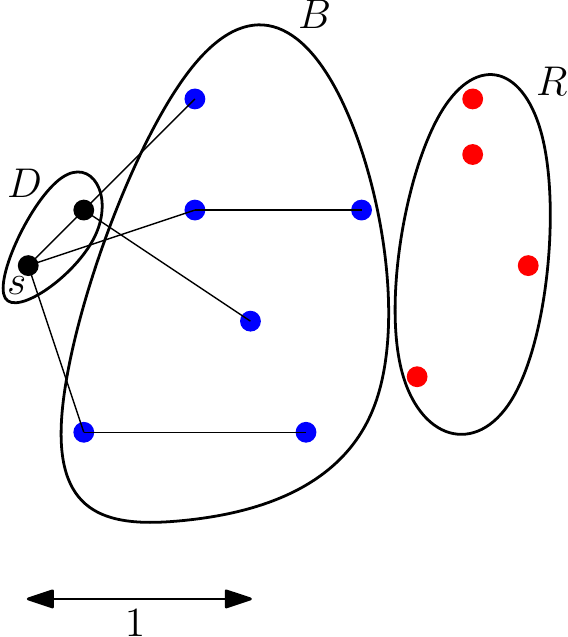}
\end{center}
\caption{Sets $D$, $B$, and $R$ after a couple of iterations of the \kw{while} loop in \proc{WeightedShortestPaths}.}
\label{fig:d-b-r}
\end{figure}

Let us explain the actual bottleneck of our approach to reduce the time from ${\mathcal O}(n^{1+\varepsilon})$ to ${\mathcal O}(n\polylog n)$. The inner workings of the data structure of Theorem~\ref{thm:aes-bcp} is based on two dynamic data structures. One of them has to compute $\min_{b\in B }\delta(r_0,b)$ for a given $r_0\in R$. The other has to compute $\min_{r\in R}\delta(r,b_0)$ for a given $b_0\in B$. For the latter data structure we could use the dynamic nearest neighbour data structure by Chan \cite{chan-10}, yielding polylogarithmic update and query times. However, for the former we need a dynamic weighted Voronoi diagram, and for this we only have the data structure developed by Agarwal, Efrat and Sharir~\cite{aes-99}. A dynamic data structure for dynamic weighted Voronoi diagrams with updates and queries in polylogarithmic time readily would lead to ${\mathcal O}(n\polylog n)$. 

%%%%%%%%%%%%%%%%%%%%%%%%%%%%%%%%%%%%%%%%%%%%%%%%%%%%%%%%%%%%%%%%%%%%%%%%%%%%555
\subsection{Correctness and Complexity}

Note that in the algorithm a point can only go from red to blue and from blue to dead. Dead points stay dead.
We f\mbox{}irst prove two minor properties.
 
\begin{lemma}\label{le:dead}
Once a point $b^\ast$ is moved from $B$ to $D$, it no longer has any edges to points in $R$.
\end{lemma}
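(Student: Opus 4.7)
The plan is to extract the needed inequality directly from the BCP optimality at the moment of the move. When $b^\ast$ is transferred from $B$ to $D$, the pair $(b^\ast,r^\ast)$ returned in line 12 minimizes $\delta(b,r)=\dist[b]+\|r-b\|$ over $(b,r)\in B\times R$, and line 13 guarantees $\|b^\ast-r^\ast\|>1$. For any point $r$ that is currently in $R$, BCP-optimality gives $\dist[b^\ast]+\|b^\ast-r^\ast\|\leq \dist[b^\ast]+\|b^\ast-r\|$, so $\|b^\ast-r\|\geq\|b^\ast-r^\ast\|>1$. Hence at the very moment $b^\ast$ enters $D$, it has no $G(P)$-edge to any point currently in $R$.

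To extend this to all subsequent moments, I would exploit the monotonicity of $R$: the algorithm only removes points from $R$ (line 18) and never inserts into it. Thus every point that lies in $R$ at any later stage was already in $R$ at the moment $b^\ast$ moved to $D$. Since the edge set of $G(P)$ depends only on the fixed Euclidean distances, the absence of $G(P)$-edges from $b^\ast$ to points currently in $R$ is preserved from that point on.

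I do not foresee a real obstacle. The only delicate aspect is the phrase ``no longer has any edges'': it must be read relative to the time-varying set $R$, and the argument works precisely because $R$ is monotonically shrinking. The lemma will presumably be used afterwards to justify ignoring $D$ altogether when searching for the next edge of the shortest-path tree, which is the structural reason for splitting $S$ into $B$ and $D$ in the first place.
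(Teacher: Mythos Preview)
Your proposal is correct and follows essentially the same route as the paper: both use BCP-optimality with $b=b^\ast$ to deduce $\|b^\ast-r\|\ge\|b^\ast-r^\ast\|>1$ for every $r\in R$ at the moment of the move. Your additional paragraph on the monotone shrinking of $R$ is a welcome clarification of the phrase ``no longer'' that the paper leaves implicit.
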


\begin{proof}
The move of $b^\ast$ from $B$ to $D$ is a consequence of two facts: i) $(b^\ast,r^\ast)$ is a BCP in $B\times R$ --- achieving the minimum of the expression
$$\min_{r\in R,\,b\in B}\delta(r,b) ~=~ \min_{r\in R,\,b\in B}\{\dist[b]+\|r-b\|\},$$
and ii) $\|b^\ast-r^\ast\|>1$. Therefore, $\forall r\in R\colon\|b^\ast-r\|>1$.
\end{proof}

\begin{lemma}\label{le:connected}
$G(P)$ is not connected if and only if there is a moment in the algorithm when it holds that $R\neq\emptyset$ and $B=\emptyset$.
\end{lemma}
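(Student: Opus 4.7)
The plan is to prove the two directions separately. The reverse direction ($\Leftarrow$) is essentially a direct corollary of Lemma~\ref{le:dead}, while the forward direction ($\Rightarrow$) requires a short inductive invariant guaranteeing that unreachable points never leave $R$, together with a termination argument.

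For the reverse direction, suppose there is a moment at which $R\neq\emptyset$ and $B=\emptyset$. I would first note that the sets evolve monotonically: the only allowed transitions are $R\to B$ (lines 18--19) and $B\to D$ (lines 14--15), so once a point is removed from $R$ it never returns. Consequently, if a point $d$ was moved from $B$ to $D$ at some earlier iteration, then by Lemma~\ref{le:dead} it had no edge to the $R$ existing at that moment, and hence no edge to the (possibly smaller) current $R$. Thus at the critical moment no point of $D$ is adjacent in $G(P)$ to any point of $R$. Since $s\in B\cup D$ throughout the execution and $B=\emptyset$, we have $s\in D$, so the connected component of $s$ in $G(P)$ is disjoint from the nonempty $R$, proving that $G(P)$ is not connected.

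For the forward direction, I would prove the following invariant by induction on iterations of the main \textbf{while} loop: every point in $B\cup D$ lies in the connected component of $s$ in $G(P)$. The base case is immediate since $B\cup D=\{s\}$ at the start. An iteration that moves a blue point to $D$ preserves $B\cup D$, so the invariant holds; an iteration that moves a red point $r^\ast$ to $B$ does so only after checking $\|b^\ast-r^\ast\|\le 1$ for some $b^\ast\in B$, so $b^\ast r^\ast$ is an edge of $G(P)$, and $b^\ast$ is in the component of $s$ by induction, whence so is $r^\ast$. Now assume $G(P)$ is not connected and pick any point $p$ outside the component of $s$. By the invariant, $p$ can never enter $B\cup D$, so $p\in R$ throughout the execution, in particular $R$ is never empty.

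It remains to show the algorithm terminates, so that it must exit through the $B=\emptyset$ branch (line 11). Consider the potential $\Phi=2|R|+|B|$: a $B\to D$ step decreases $|B|$ by $1$, lowering $\Phi$ by $1$; an $R\to B$ step decreases $|R|$ by $1$ and increases $|B|$ by $1$, again lowering $\Phi$ by $1$. Since $\Phi\ge 0$, the loop performs at most $2n$ iterations. Because $R$ stays nonempty, the normal exit condition $R=\emptyset$ of the \textbf{while} loop never fires, so the algorithm must reach a moment with $B=\emptyset$ and $R\neq\emptyset$, exactly as claimed. The only mildly subtle point in the whole argument is the one emphasised above --- that dead points remain non-adjacent to $R$ in all later iterations --- but this is immediate from the monotonicity of $R$.
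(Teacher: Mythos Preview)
Your proof is correct and follows essentially the same approach as the paper's: the $\Leftarrow$ direction appeals to Lemma~\ref{le:dead}, and the $\Rightarrow$ direction argues that an unreachable point stays in $R$ forever so the algorithm must terminate via the $B=\emptyset$ branch. Your version is simply more explicit --- you spell out the monotonicity of $R$ needed to carry Lemma~\ref{le:dead} forward to later iterations, the invariant that $B\cup D$ lies in the component of $s$, and the potential argument for termination --- all of which the paper's two-line proof leaves implicit.
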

\begin{proof}
($\Rightarrow$): If $G(P)$ is not connected, there is a point $r$ that begins in $R$ and is not reachable from $s$. It never leaves $R$, so $R$ stays nonempty throughout. $B$ gets emptied to $D$ once the data structure starts returning only BCPs that do not form an edge in $G(P)$.

\noindent($\Leftarrow$): By Lemma~\ref{le:dead} the dead points do not have any edges to red points. If there are no blue points then $G(P)$ is not connected.
\end{proof}

\begin{lemma}\label{le:dijkstra-correct}
The algorithm \proc{WeightedShortestPaths} correctly computes the shortest distances from the source and the parents of points in a SSSP tree.
\end{lemma}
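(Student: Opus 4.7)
The plan is to argue correctness as a variant of the standard Dijkstra invariant, using Lemmas~\ref{le:dead} and~\ref{le:connected} to justify the role of the dead set $D$ and the early termination. I maintain the invariant that, at the start of every iteration of the main \kw{while} loop, every point $p\in B\cup D$ satisfies $\dist[p]=d(s,p)$ in $G(P)$ and the pointers $\pi[p]$ trace out a shortest path from $s$ to $p$. The base case is immediate: initially $B=\{s\}$, $D=\emptyset$, and $\dist[s]=0=d(s,s)$.

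For the inductive step I distinguish the two branches of the \kw{if} at line~13. If $\|b^*-r^*\|>1$, we only move $b^*$ from $B$ to $D$ without altering any $\dist[\cdot]$ or $\pi[\cdot]$, so the invariant is preserved trivially. The substantive case is when $\|b^*-r^*\|\le 1$ and $r^*$ is promoted from $R$ to $B$ with $\dist[r^*]\gets \dist[b^*]+\|b^*-r^*\|$ and $\pi[r^*]\gets b^*$. Here I must show
\[
  \dist[b^*]+\|b^*-r^*\| ~=~ d(s,r^*).
\]
The upper bound is immediate since, by the inductive hypothesis, $\dist[b^*]=d(s,b^*)$ and $b^*r^*$ is an edge of $G(P)$ of weight $\|b^*-r^*\|$. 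For the lower bound I mimic Dijkstra. Consider any shortest path $s=v_0,v_1,\dots,v_k=r^*$ in $G(P)$ and let $j\ge 1$ be the smallest index with $v_j\in R$ at this moment; such a $j$ exists because $v_0=s\in B\cup D$ and $v_k=r^*\in R$. Then $v_{j-1}\in B\cup D$, and Lemma~\ref{le:dead} forces $v_{j-1}\in B$, because the edge $v_{j-1}v_j$ of $G(P)$ has $v_j\in R$. The inductive hypothesis yields
\[
  \delta(v_j,v_{j-1}) ~=~ \dist[v_{j-1}]+\|v_{j-1}-v_j\| ~=~ d(s,v_{j-1})+\|v_{j-1}-v_j\| ~=~ d(s,v_j) ~\le~ d(s,r^*),
\]
and BCP optimality of $(b^*,r^*)$ gives $\delta(b^*,r^*)\le \delta(v_j,v_{j-1})\le d(s,r^*)$. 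Combined with the upper bound we get equality, which simultaneously shows that the new $\dist[r^*]$ is correct and that following $\pi[r^*]=b^*$ back to $s$ produces a shortest path.

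To finish, I address termination and disconnected inputs. The main loop terminates because every iteration either moves a point out of $R$ (into $B$) or shrinks $B$ (into $D$), and $|R|+|B|$ strictly decreases in both branches. When the algorithm exits at line~11 with $B=\emptyset$ and $R\ne\emptyset$, Lemma~\ref{le:connected} tells us $G(P)$ is disconnected and, by Lemma~\ref{le:dead}, the remaining red points have no edges to $B\cup D$; hence their correct distance $d(s,\cdot)=\infty$ coincides with the value $\dist[\cdot]$ that was never updated, so the partial tables returned are correct. The potentially tricky point, which is neatly handled by Lemma~\ref{le:dead}, is ruling out that the predecessor $v_{j-1}$ of the first red vertex on a shortest path could be dead; without that lemma we would have no guarantee that the minimum in the BCP over $B\times R$ suffices to realize the true distance to $r^*$. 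Once this single case is excluded, the proof is essentially the textbook Dijkstra correctness argument, with $\delta$ playing the role of the priority key.
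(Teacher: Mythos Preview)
Your proof is correct and follows the same approach as the paper---reducing correctness to the standard Dijkstra invariant, with Lemma~\ref{le:dead} ensuring that the BCP over $B\times R$ really captures the minimum over all of $(B\cup D)\times R$---only you spell out in full what the paper leaves as a one-line appeal to Dijkstra. One minor slip: in your termination argument, $|R|+|B|$ does \emph{not} strictly decrease when a red point becomes blue (it stays the same); use $2|R|+|B|$ instead, or simply observe that each point changes color at most twice.
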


\begin{proof}
As in Dijkstra's algorithm, we f\mbox{}ind the vertex $r^\ast$ minimizing the expression $\func{dist}[b]+\|b-r\|$ over all vertices $b\in B\cup D$ and $r\in R$ with $\|b-r\|\le 1$, and update the information of $r^\ast$ accordingly. Thus the correctness follows from the correctness of Dijkstra's algorithm. 
\end{proof}

\begin{lemma}\label{le:dijkstra-time}
The algorithm \proc{WeightedShortestPaths} runs in $\mathcal{O}(n^{1+\varepsilon})$ time and space, for an arbitrary constant $\varepsilon>0$.
\end{lemma}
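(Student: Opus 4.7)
The plan is to bound the total running time as (number of iterations of the main \kw{while} loop) $\times$ (amortized cost per iteration), where the per-iteration cost is dominated by the update/query operations of the data structure from Theorem~\ref{thm:aes-bcp}. The space bound will follow immediately from the size of that data structure, so I will focus on the running time.

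First I would bound the number of iterations of the \kw{while} loop on line~9. At every iteration, exactly one of two things happens: either a point is moved from $B$ to $D$ (lines~14--15), or a point is moved from $R$ to $B$ (lines~18--19). A point moves from $R$ to $B$ at most once (since $R$ only loses points) and from $B$ to $D$ at most once (since $D$ never releases points, and a point that enters $D$ originated in $B$). Since points originate in either $\{s\}\subseteq B$ or in $R$, the total number of such transitions is at most $2n$, and therefore the \kw{while} loop executes at most $2n$ times.

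Next I would account for the cost of a single iteration. The initialization on lines~1--8 costs $\mathcal{O}(n^{1+\varepsilon})$ time: the preprocessing of the data structure of Theorem~\ref{thm:aes-bcp} is $\mathcal{O}(n^{1+\varepsilon})$, and the remaining book-keeping is linear. Within one iteration of the \kw{while} loop, we perform one BCP query and one insertion or deletion on the data structure (and a constant number of assignments). By Theorem~\ref{thm:aes-bcp}, an insertion or a deletion costs $\mathcal{O}(n^\varepsilon)$ amortized time, and the BCP pair is maintained explicitly so it can be retrieved in $\mathcal{O}(\log n)=\mathcal{O}(n^\varepsilon)$ time. One subtle point I would double-check is that the metric $\delta(r,b)=\dist[b]+\|r-b\|$ is well-defined and fixed throughout the portion of the algorithm during which $b$ resides in $B$: the weight $\dist[b]$ is assigned on line~16 right before $b$ is inserted into $B$ on line~19, and is never modified afterwards, so the weighted Euclidean metric supported by Theorem~\ref{thm:aes-bcp} applies without change between operations.

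Multiplying the $\mathcal{O}(n)$ iteration bound by the $\mathcal{O}(n^\varepsilon)$ amortized per-iteration cost, and adding the $\mathcal{O}(n^{1+\varepsilon})$ initialization cost, yields a total running time of $\mathcal{O}(n^{1+\varepsilon})$. The space bound follows directly from the size guarantee of the data structure in Theorem~\ref{thm:aes-bcp}, since the tables $\dist[\cdot]$, $\pi[\cdot]$ and the sets $B,D,R$ occupy only $\mathcal{O}(n)$ additional space. The main (only) potential obstacle is verifying that the amortized analysis of Theorem~\ref{thm:aes-bcp} applies to the actual update sequence generated here, which it does because the theorem is stated for arbitrary sequences of insertions and deletions under any (fixed per-point) weighted Euclidean metric, and our sequence has length $\mathcal{O}(n)$.
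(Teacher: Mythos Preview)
Your proposal is correct and follows essentially the same approach as the paper: bound the number of \kw{while}-loop iterations by $O(n)$ via the observation that each iteration moves a point either from $R$ to $B$ or from $B$ to $D$, and then charge $\mathcal{O}(n^\varepsilon)$ amortized time per iteration using Theorem~\ref{thm:aes-bcp}. Your write-up is in fact a bit more explicit than the paper's (you separately account for initialization, space, and the well-definedness of the weight $\dist[b]$), but the core argument is identical.
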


\begin{proof}
The outer while loop runs at most $2n-2$ times, as in each iteration either a blue point is deleted and placed among the dead, or a red point becomes blue. If $G(P)$ is not connected, the loop terminates even earlier by Lemma~\ref{le:connected}. In each iteration, either we f\mbox{}inish because $B=\emptyset$, or we spend ${\mathcal O}(1)$ time plus the time to make ${\mathcal O}(1)$ operations in the BCP dynamic DS. Since by Theorem~\ref{thm:aes-bcp} each operation in the BCP dynamic DS takes $\mathcal{O}(n^\varepsilon)$ amortized time, the result follows.
\end{proof}

\begin{thm}
Let $P$ be a set of $n$ points in the plane, $s\in P$, and $\varepsilon>0$ an arbitrary constant. The algorithm $\proc{WeightedShortestPaths}(P,s)$ returns the correct distances from the source in the graph $G(P)$ in $\mathcal{O}(n^{1+\varepsilon})$ time.
\end{thm}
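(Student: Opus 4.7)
The plan is to combine the two preceding lemmas: Lemma~\ref{le:dijkstra-correct} supplies correctness and Lemma~\ref{le:dijkstra-time} supplies the running-time bound. Both lemmas have already been established, so the main task is just to verify that their conjunction really does imply the theorem, including the handling of disconnected inputs.

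First I would appeal to Lemma~\ref{le:dijkstra-correct} to conclude that on termination $\dist[p]$ equals $d_{G(P)}(s,p)$ for every $p\in P$ reachable from $s$. The only subtlety is the early return in line~11, which fires as soon as $B=\emptyset$ while $R\neq\emptyset$. By Lemma~\ref{le:connected} this happens precisely when $G(P)$ is disconnected, and at that moment every point still in $R$ is unreachable from $s$; the initial value $\dist[p]=\infty$ for those points is therefore correct, and the returned table is correct regardless of which of the two return statements is executed.

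Second, for the running time I would invoke Lemma~\ref{le:dijkstra-time}, which gives the bound $\mathcal{O}(n^{1+\varepsilon})$ directly. The argument there rests on the monotonicity of the colour transitions (red $\to$ blue $\to$ dead), which caps the outer \kw{while} loop at $2n-2$ iterations, together with the $\mathcal{O}(n^{\varepsilon})$ amortized cost of each update and query in the BCP data structure of Theorem~\ref{thm:aes-bcp}.

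The hardest part of the overall development actually lives upstream of the present theorem, in Lemma~\ref{le:dead}: one needs to know that a blue point whose BCP partner is at distance greater than $1$ has no red neighbours at all, so it can be discarded permanently into $D$. Without that observation the BCP structure would have to be queried with $R$ against $B\cup D$ rather than $B$, and dead points would keep incurring updates, ruining the runtime bound and forcing additional reasoning in the correctness argument. Once Lemma~\ref{le:dead}, and with it Lemmas~\ref{le:dijkstra-correct} and~\ref{le:dijkstra-time}, are in hand, the theorem follows immediately by quoting them in succession.
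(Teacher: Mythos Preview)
Your proposal is correct and follows exactly the paper's own approach: the theorem is obtained simply by combining Lemma~\ref{le:dijkstra-correct} (correctness) with Lemma~\ref{le:dijkstra-time} (running time). Your additional remarks about the disconnected case and the role of Lemma~\ref{le:dead} are accurate elaborations, but the paper's proof is just the one-line citation of the two lemmas.
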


\begin{proof}
By Lemmata~\ref{le:dijkstra-correct} and \ref{le:dijkstra-time}.
\end{proof}

%%%%%%%%%%%%%%%%%%%%%%%%%%%%%%%%%%%%%%%%%%%%%%%%%%%%%%%%%%%%%%%%%%%%%%%%%%%%
\section{Conclusions}
We have given algorithms to compute shortest paths in unit disk graphs in near-linear time. For the unweighted case it is easy to show that our algorithm is asymptotically optimal in the algebraic decision tree. A simple reduction from the problem of f\mbox{}inding the maximum gap in a set of numbers shows that deciding if $G(P)$ is connected requires $\Omega(n\log n)$ time. As discussed in the text, a better data structure to dynamically maintain the bichromatic closest pair would readily imply an improvement in our time bounds for the weighted case.

A generalization of the graph $G(P)$ is the graph $G_{\le t}(P)$, where two points are connected whenever their distance is at most $t$. Thus $G(P)$ is $G_{\le 1}(P)$. Two natural extensions of our results come to our mind. 
\begin{itemize}
\item Can we compute ef\mbox{}f\mbox{}iciently a compact representation of the distances in all the graphs $G_{\le t}(P)$? 
\item Can we f\mbox{}ind,
for a given $u,v,k$, the minimum $t$ such that in $G_{\le t}(P)$ the distance from $u$ to $v$ is at most $k$? This problem can be solved in roughly $O(n^{4/3})$ time using~\cite{selecting,expander} to guide a binary search in the interpoint distances. Can it be solved in near-linear time?
\end{itemize}

%%%%%%%%%%%%%%%%%%%%%%%%%%%%%%%%%%%%%%%%%%%%%%%%%%%%%%%%%%%%%%%%%%%%%%%%%%%%
\section*{Acknowledgments}
We would like to thank Timothy Chan, Alon Efrat, and David Eppstein for several useful comments. In particular, we are indebted to Timothy Chan for pointing out the work of Roditty and Segal~\cite{rs-11} and to Alon Efrat for explaining the alternative algorithm for the unweighted case discussed in the introduction.
%%%%%%%%%%%%%%%%%%%%%%%%%%%%%%%%%%%%%%%%%%%%%%%%%%%%%%5
\bibliographystyle{abbrv}
\bibliography{biblio}

\begin{thebibliography}{10}

\bibitem{selecting}
P.~K. Agarwal, B.~Aronov, M.~Sharir, and S.~Suri.
\newblock {Selecting Distances in the Plane}.
\newblock {\em Algorithmica}, 9(5):495--514, 1993.

\bibitem{aes-99}
P.~K. Agarwal, A.~Efrat, and M.~Sharir.
\newblock {Vertical Decomposition of Shallow Levels in 3-Dimensional
  Arrangements and Its Applications}.
\newblock {\em SIAM J. Comput.}, 29(3):912--953, 1999.

\bibitem{aos-06}
P.~K. Agarwal, M.~H. Overmars, and M.~Sharir.
\newblock {Computing Maximally Separated Sets in the Plane}.
\newblock {\em SIAM J. Comput.}, 36(3):815--834, 2006.

\bibitem{bmnsz-04}
P.~Bose, A.~Maheshwari, G.~Narasimhan, M.~H.~M. Smid, and N.~Zeh.
\newblock Approximating geometric bottleneck shortest paths.
\newblock {\em Comput. Geom.}, 29(3):233--249, 2004.

\bibitem{ck-95}
P.~B. Callahan and S.~R. Kosaraju.
\newblock {A Decomposition of Multidimensional Point Sets with Applications to
  k-Nearest-Neighbors and n-Body Potential Fields}.
\newblock {\em J. ACM}, 42(1):67--90, 1995.

\bibitem{chan-10}
T.~M. Chan.
\newblock {A dynamic data structure for 3-D convex hulls and 2-D nearest
  neighbor queries}.
\newblock {\em J. ACM}, 57(3), 2010.

\bibitem{ce-01}
T.~M. Chan and A.~Efrat.
\newblock {Fly Cheaply: On the Minimum Fuel Consumption Problem}.
\newblock {\em J. Algorithms}, 41(2):330--337, 2001.

\bibitem{ccj-90}
B.~N. Clark, C.~J. Colbourn, and D.~S. Johnson.
\newblock Unit disk graphs.
\newblock {\em Discrete Mathematics}, 86(1–3):165 -- 177, 1990.

\bibitem{eik-01}
A.~Efrat, A.~Itai, and M.~J. Katz.
\newblock {Geometry Helps in Bottleneck Matching and Related Problems}.
\newblock {\em Algorithmica}, 31(1):1--28, 2001.

\bibitem{eppstein-95}
D.~Eppstein.
\newblock {Dynamic Euclidean Minimum Spanning Trees and Extrema of Binary
  Functions}.
\newblock {\em Discrete {\&} Computational Geometry}, 13:111--122, 1995.

\bibitem{eppstein-09}
D.~Eppstein.
\newblock Testing bipartiteness of geometric intersection graphs.
\newblock {\em ACM Transactions on Algorithms}, 5(2), 2009.

\bibitem{gz-05}
J.~Gao and L.~Zhang.
\newblock {Well-Separated Pair Decomposition for the Unit-Disk Graph Metric and
  Its Applications}.
\newblock {\em SIAM J. Comput.}, 35(1):151--169, 2005.

\bibitem{hm-85}
D.~S. Hochbaum and W.~Maass.
\newblock {Approximation Schemes for Covering and Packing Problems in Image
  Processing and VLSI}.
\newblock {\em J. ACM}, 32(1):130--136, 1985.

\bibitem{expander}
M.~J. Katz and M.~Sharir.
\newblock {An Expander-Based Approach to Geometric Optimization}.
\newblock {\em SIAM J. Comput.}, 26(5):1384--1408, 1997.

\bibitem{rs-11}
L.~Roditty and M.~Segal.
\newblock {On Bounded Leg Shortest Paths Problems}.
\newblock {\em Algorithmica}, 59(4):583--600, 2011.

\bibitem{Vaidya-89}
P.~M. Vaidya.
\newblock {Geometry Helps in Matching}.
\newblock {\em SIAM J. Comput.}, 18(6):1201--1225, 1989.

\end{thebibliography}

\end{document}